\newtheorem{theorem}{Theorem}[section]
\theoremstyle{definition}
\newtheorem{definition}{Definition}[section]
\newtheorem*{remark}{Remark}
\begin{document}

\title{HTNet: Dynamic WLAN Performance Prediction using Heterogenous Temporal GNN
}

\author{
    \IEEEauthorblockN{Hongkuan Zhou\IEEEauthorrefmark{1}, Rajgopal Kannan\IEEEauthorrefmark{2}, Ananthram Swami\IEEEauthorrefmark{2},Viktor Prasanna\IEEEauthorrefmark{1}}
    \IEEEauthorblockA{\IEEEauthorrefmark{1}University of Southern California}
    \IEEEauthorblockA{\{hongkuaz,prasanna\}@usc.edu}
    \IEEEauthorblockA{\IEEEauthorrefmark{2}DEVCOM US Army Research Lab}
    \IEEEauthorblockA{\{rajgopal.kannan.civ,ananthram.swami.civ\}@army.mil}
}

\maketitle

\begin{abstract}
Predicting the throughput of WLAN deployments is a classic problem that occurs in the design of robust and high performance WLAN systems. However, due to the increasingly complex communication protocols and the increase in interference between devices in denser and denser WLAN deployments, traditional methods either have substantial runtime or enormous prediction error and hence cannot be applied in downstream tasks. Recently, Graph Neural Networks have been proven to be powerful graph analytic models and have been broadly applied to various networking problems such as link scheduling and power allocation. In this work, we propose HTNet, a specialized Heterogeneous Temporal Graph Neural Network that extracts features from dynamic WLAN deployments. Analyzing the unique graph structure of WLAN deployment graphs, we show that HTNet achieves the maximum expressive power on each snapshot. Based on a powerful message passing scheme, HTNet requires fewer number of layers compared with other GNN-based methods which entails less supporting data and runtime. To evaluate the performance of HTNet, we prepare six different setups with more than five thousands dense dynamic WLAN deployments that cover a wide range of real-world scenarios. HTNet achieves the lowest prediction error on all six setups with an average improvement of 25.3\% over the state-of-the-art methods. 
\end{abstract}

\begin{IEEEkeywords}
throughput prediction, WLANs, machine learning, temporal graph neural network, channel bounding
\end{IEEEkeywords}

\section{Introduction}

Ever since the release of the first version of the IEEE 802.11 (Wi-Fi) standard in 1997, Wi-Fi has become the most established and most widely used Wireless Local Area Network (WLAN) technical standards. 
Recently, along with the development of Internet of Things (IoT)~\cite{7756279} and wireless smart devices, the density of WLAN deployments has dramatically increased, leading to strong interference that could seriously affect performance if not properly handled. 
To fulfill the growing requirements on latency and throughput of these wireless devices, the 802.11 standard is constantly being updated to introduce advanced features such as Dynamic Channel Bounding (DCB)~\cite{2019dcb} and multi-user Orthogonal Frequency Division Multiple Access (OFDMA)~\cite{7422404}.

A fundamental problem in WLAN optimizations is the throughput prediction problem~\cite{henty2001throughput}. An accurate and fast throughput predictor allows network managers (whether automated or not) to explore more of the design space and design high performance and robust WLAN systems. However, predicting the throughput in next generation WLANs is challenging due to the strong interference in dense deployments and the vast variety of dynamically configured WLAN features. On the one hand, classic theory-based methods that use channel statistics such as Signal to Interference and Noise Ratio (SINR) and Received Signal Strength Indicator (RSSI) fail to capture sophisticated interactions at the Media Access Control (MAC) and Physical (PHY) layers, leading to fallacious predictions~\cite{7338298}. On the other hand, event-based simulators~\cite{barrachina2019komondor,ns3} require order-of-magnitude more simulation time compared with the actual transmission time, and hence can only be used for verification purposes. In order to quickly and rapidly predict throughput, the predictor needs to analyze the multi-domain information from WLAN deployments including channel state information, connectivity information, and their trends with respect to time.

Recently, Graph Neural Networks (GNN)~\cite{kipf2017semi} and variants have achieved significant successes in learning non-Euclidean graph data. By iteratively gathering and aggregating information from neighbors, the node embeddings generated by GNNs remarkably outperform traditional graph analytic algorithms in many downstream tasks such as node classification and link prediction. However, GNNs suffer from neighbor explosion~\cite{graphsaint-iclr20} and over-smoothing~\cite{chen2020measuring} and they need to be adapted to different operating environments. In the networking domain, GNNs have demonstrated superior performance in extracting node- and graph-level features from network graphs and are widely used in various applications such as resource allocation~\cite{shen2020graph,8815526} and traffic prediction~\cite{9158437,9299774}. However, for throughput prediction, static GNNs fail to capture the important temporal information, which leads to lower accuracy. 

To design fast and accurate throughput predictors, recent works~\cite{7338298,ATARI,cbpredict} propose Machine Learning (ML) algorithms to extract features from WLAN deployments and solve the throughput prediction problem as a regression problem. ATARI~\cite{ATARI} goes one step further to directly apply GNN to extract features from WLAN deployment graphs.   However, they lack the ability to capture complete information from WLAN deployments and analyze it in a comprehensive way. We explicitly divide the overall information of a WLAN deployment into three parts:
\begin{itemize}
    \item \textbf{Contextual information} designates the self-contained static information of an AP or STA, including the primary channel, available channels, transmission power, traffic, and location.
    \item \textbf{Structural information} designates the static information related to two entities. For example, the structural information of an STA and its corresponding AP includes RSSI, SINR, and airtime.
    \item \textbf{Temporal information} designates the dynamic information that changes with time. Depending on different WLAN deployments, the temporal information could be the dynamic channel allocation, dynamic STA location, dynamic interference source, etc.
\end{itemize}
To jointly learn these three parts, we propose HTNet, a specialized Heterogeneous Temporal Graph Neural Network (HTGNN) that extracts features from WLAN deployment without any information loss. HTNet captures contextual and structural information through an attention-based heterogeneous message passing scheme. We prove that HTNet achieves maximal expressive power and can distinguish between any two different WLAN deployments. For temporal information, HTNet views WLAN deployments as Discrete Time Dynamic Graphs (DTDGs) and employs a sequence model to regulate the message passing process. Note that HTNet is designed to extract features from WLAN deployments and can be applied not only to the throughput prediction problem but also to many other network problems such as dynamic channel allocation and power control. We summarize the main contributions of this work below:
\begin{itemize}
    \item We propose HTNet, a low-complexity fast inference %
    HTGNN-based feature extractor for WLAN deployment which is the first model that jointly captures the complete contextual, structural, and temporal information in WLAN deployments.
    \item We analyze the characteristics of WLAN deployment snapshots and design a message passing scheme with maximal expressive power.
    \item We generate and publish the first dynamic WLAN deployment throughput prediction dataset that contains six different setups which cover a wide range of real-world scenarios and can be  used by other researchers.
    \item We evaluate the performance of HTNet on the throughput prediction problem on the aforementioned dataset. HTNet achieves the lowest prediction error on all six setups, outperforming state-of-the-art methods by an average of 25.3\%.
\end{itemize}
\section{Background}

\subsection{Notation}

We represent scalars as lowercase non-bold letters such as $t$ and $\alpha$, vectors as lowercase bold letters such as $\mathbf{h}$ and $\mathbf{b}$, and matrices as uppercase bold letters such as $\mathbf{A}$ and $\mathbf{W}$. Nodes are represented using the letters $u,v,i,j$ while edges are represented using the letter $e$ or their source and destination nodes $uv$ and $ij$. The variables of the $k$-th layer are denoted by $(k)$ in brackets.

\subsection{Problem Definition}
\label{sec:backpd}

We first define the throughput prediction problem in dynamic WLAN deployments. Given a dynamic WLAN deployment with some AP and STA nodes where each STA node is connected to one AP node at any time, the throughput prediction problem aims at predicting the dynamic throughput of each STA. 
We assume the time granularity is $t_g$ (i.e., the WLAN deployment changes every $t_g$ time intervals). Without loss of generality, the throughput prediction problem could be represented as a node regression problem on a heterogeneous Discrete Time Dynamic Graph (DTDG), which can be represented as a series of heterogeneous graph snapshots $\{\mathcal G_t|t=0,t_g,2t_g,\cdots\}$ in which each static snapshot $\mathcal G_t(\{\mathcal V_{it}\},\{\mathcal E_{jt}\})$ has two types of nodes AP and STA and two types of edges AP-STA and AP-AP. For simplicity, each node or each edge, despite the heterogeneity, is associated with a fixed length node or edge feature vector $\mathbf v$ or $\mathbf e$ that represents the channel, transmission, interference, and other information. The missing features are denoted as zeros in the vectors. This allows the heterogeneous DTDG to be easily converted to homogeneous DTDG to support homogeneous baseline methods. 
Denoting the throughput of STA node $i$ and time $t$ to be $y_{it}$, the throughput prediction problem is defined as a dynamic node regression problem on $\{\mathcal G_t\}$ to predict $y_{it}$ for all STA nodes $i$ at each timestamp $t=0,t_g,2t_g,\cdots$.
We adopt the supervised learning setup in which the ground truth throughput is given for the training and validation sets and the goal is to predict the throughput in the test set. Please refer to Section \ref{sec:exp} for details.

\subsection{Static Graph Neural Networks}

Static Graph Neural Networks (GNNs) encode graphs into node embeddings by iteratively performing message passing in each Graph Neural Layer (GNL). The vanilla homogeneous GNN Graph Convolutions Network (GCN) \cite{kipf2017semi} is defined on a static homogeneous undirected graph $\mathcal G(\mathcal V,\mathcal E)$ where each node $v\in\mathcal V$ is associated with some attribute features represented by a fixed length vector. The $k$-th GCN layer computes the hidden features by the following message passing scheme
\begin{equation}
    \mathbf H_v^{(k)}=\text{ReLU}(\tilde{\mathbf A}\mathbf H_v^{(k-1)}\mathbf W^{(k)}),
\end{equation}
where $\mathbf H_v$ is the hidden node feature matrix of all the nodes in the graph stacked vertically, $\tilde{\mathbf A}$ is the normalized adjacency matrix, and $\mathbf{W}$ is the weight matrix. Multiple GCN layers are stacked to compute the final output node embeddings
$\mathbf H_v^{(K)}$ which are sent to downstream applications for further processing. In the supervised training setup, the downstream networks directly provide the gradients so that the weight matrices $\mathbf W$ can be learned. The receptive field of a node is defined as the set of supporting nodes used to compute its node embedding. For example, a 2-layer GCN has the receptive field of 2-hop neighbors.
To improve performance and support various applications, later works propose many GNL variants based on the vanilla GCN layer. EGNN~\cite{gong2019exploiting} enhances GCN by maintaining additional hidden edge features $\mathbf H_e^{(k)}$ in the message passing scheme. GAT~\cite{velivckovic2018graph} introduces the attention feature when aggregating neighbor information. JK-Net~\cite{xu2018representation} improves the performance of deep GNNs with residual connections. 

A static heterogeneous graph could be represented by $\mathcal G(\{\mathcal V_i\},\{\mathcal E_j\})$ where each $i$ represent one type of node and each $j$ represent one type of edge. The edge types are also referred to as the relations $\mathcal{R}=\{j\}$. To model the heterogeneity, R\nobreakdash-GCN~\cite{schlichtkrull2018modeling} treats each relation $j$ independently and uses a combine function $\gamma(\cdot)$ to generate the output node embedding, which can be implemented as mean, concatenation, etc. A R\nobreakdash-GCN layer has the forward model
\begin{equation}
    \mathbf H_v^{(k)}=\gamma\left(\text{GNL}\left(\tilde{\mathbf A}^j,\mathbf{H}_v^{(k-1)},\mathbf W^{(k)}_j\right),j\in\mathcal R\right),
\end{equation}
where GNL is any homogeneous Graph Neural Layer, $\tilde{\mathbf A}^j$ is the normalized adjacency matrix of relation $j$, and $\mathbf{W}^{(k)}_j$ is the weights in GNL of relation $j$.

\subsection{Temporal Graph Neural Networks on DTDGs}

Dynamic graphs can be expressed as set of graph events. For example, in a dynamic WLAN deployment, the mobility of STAs can be described as a graph event in which a node feature vector changes. A hand over of an STA between two APs can be described as two graph events of an edge disappearing and another edge appearing.
In a dynamic graph, each graph event is associated with a timestamp. If the timestamps are discrete, we define these dynamic graphs as Discrete Time Dynamic Graph (DTDG). A DTDG with discrete timestamps $\{t_i\}$ can be represented by a sequence of static graph snapshots $\{\mathcal G_{t_i}\}$ in which each static graph $\mathcal G_{t_i}$ is the snap of the dynamic graph immediately after the graph events at $t_i$. To model DTDGs, static GNNs which operate in each graph snapshot are combined with sequence models to generate dynamic node embeddings at each snapshot. EvolveGCN~\cite{EvolveGCN} applies GRU or LSTM to regulate the weight matrices in each static graph snapshot. DySAT~\cite{sankar2018dynamic} applies self-attention directly on the hidden features of each snapshot.

\subsection{Related Works in Throughput Prediction}

Throughput prediction~\cite{5378489,8486372,9484520} is a classic and important problem in the networking domain. Traditional analytic algorithms rely on Channel State Information (CSI) such as SINR and RSSI to directly compute the channel throughput, which is not sufficient to predict the throughput of current dynamic and dense WLAN deployments. In fact, they even fail in a simple two links system when the CSMA/CA feature is enabled~\cite{7338298}. Owing to the drawbacks of traditional analytic algorithms, researchers have directed attention to applying ML methods as black boxes that directly map the WLAN deployments to throughput. The works~\cite{7338298,KHAN2020102499} evaluated the performance of popular ML methods including Support Vector Regression, Random Forest, and Decision Tree. While these methods outperform traditional analytic algorithms, they only exploit partial contextual information in WLAN deployments and have unsatisfactory accuracy. In 2021, the International Telecommunication Union (ITU) hosted an AI/ML challenge~\cite{cbpredict} to predict the throughput on dense {\it static} WLAN deployments with Dynamic Channel Bonding (DCB). ITU also published the first large-scale datasets which contains 800 synthetic dense WLAN deployments with 4-12 APs per deployment and 2-20 STAs per AP. The winning team Ramon~\cite{cbpredict} adopted a Multi-Layer Perceptron (MLP) to individually process the information in each Basic Service Set (BSS). The second place team ATARI~\cite{ATARI} proposed the first GNN-based predictor which exploits both contextual and structural information. However, neither of these existing works model the important temporal information in WLAN deployments as they only operate on independent snapshots.  As also observed in Lumos5G~\cite{10.1145/3419394.3423629} throughput in 5G networks is affected by mobility,  which can only be captured by a dynamic model. Similarly, the temporal information, such as the mobility of the STAs or interference sources and the changes in channel allocation, requires the ML model to take the time-related information into consideration.

Another line of work aims at developing discrete event simulators that simulate the low-level events of each transmitted package and estimate the throughput. State-of-the-art simulators~\cite{ns3,barrachina2019komondor} support advanced features in the latest 802.11ax standard and their simulation results are within tiny discrepancy from the real-world cases. However, these simulators need order-of-magnitude more simulation time compared with the transmission time, especially in high interference dense deployments. As a result, they are usually used only for verification purposes such as generating synthetic data to supervise ML models.
\section{Approach}

\begin{figure*}[t]
  \centering
  \input{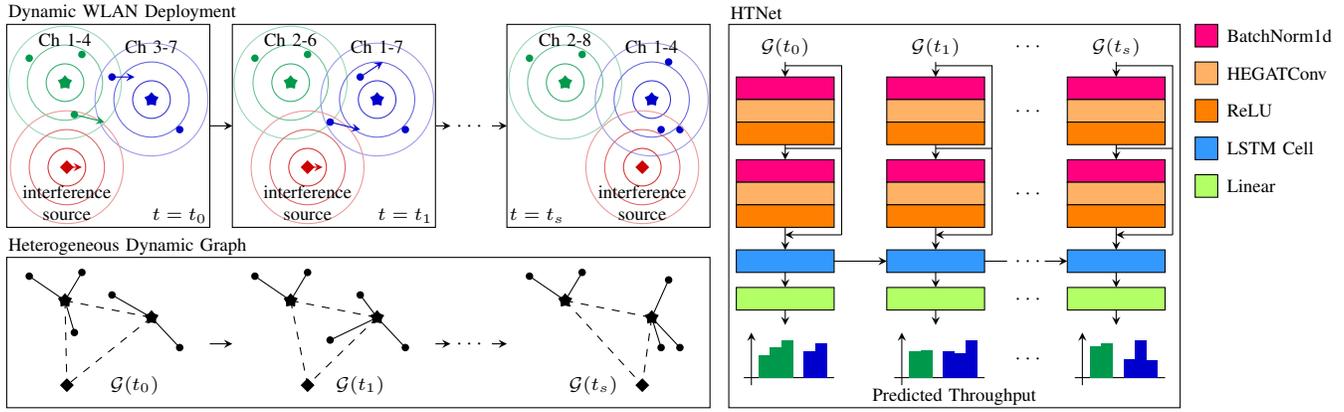}
  \caption{Overview of HTNet predicting the throughput on a dynamic WLAN deployment with two APs (star-shaped nodes), three static STAs (round-shaped nodes), two moving STAs (round-shaped nodes with arrows), and one moving interference source (diamond-shaped node). The Ch $x$-$y$ denotes that the AP is currently using channel $x$ to channel $y$. The shown neural architecture represents a 2-layer HTNet with two graph convolutional layers followed by one LSTM layer.}
  \label{fig:main}
\end{figure*}

Different GNN architectures are designed for various graph applications. To design an efficient GNN architecture on the DTDG of WLAN deployments, we first need to design the static GNN that operates on a single snapshot. The general message passing scheme of any GNN layer can be summarized using the following equation
\begin{equation}
    \bm{h}_v^{(k)}=\phi\left(\bm{h}_v^{(k-1)},\psi\left(\left\{\bm{h}_u^{(k-1)},u\in\mathcal{N}(v)\right\}\right)\right),
    \label{eq:generallayer}
\end{equation}
where $\bm{h}_v^{(k)}$ is the hidden feature of node $v$ at layer $k$, $\phi(\cdot)$ is the update function, $\psi(\cdot)$ is the aggregation function, and $\mathcal{N}(v)$ is the set of 1-hop neighbors of node $v$. To design a powerful GNN architecture on WLAN snapshots, we need to answer two important questions:
\begin{itemize}
    \item \textit{What is a good message passing scheme that allows the GNN to maximize expressive power?}
    \item \textit{How many layers does the GNN need to have?}
\end{itemize}

We arrange the rest of this section as follows. We first provide answers to these questions by analyzing the discriminative power of GNNs on WLAN graphs in Section \ref{sec:powerfulgnn}. Then, in Section \ref{sec:htnet}, we propose HTNet---a maximally powerful Heterogeneous Temporal GNN on WLAN graphs.

\subsection{Designing Powerful GNN on WLAN Graph Snapshots}
\label{sec:powerfulgnn}

Recall that any static snapshot $\mathcal{G}(t)$   consists of connected AP nodes and STA nodes, each connected only to its dedicated AP node. We first formally define the static WLAN graphs as linked stars graphs.

\begin{definition}[Star Graph]
    A star graph (star) $\mathcal{S}(\mathcal{N}, \mathcal{E})$ is an undirected graph with the single internal node $n_0\in\mathcal{N}$ connected to the rest of the nodes $\mathcal{N} \setminus n_0$ by edges $e\in\mathcal{E}$.
\end{definition}

\begin{definition}[Linked Star Graph]
    A linked star graph $\mathcal{GS}(\{\mathcal{S}_i, 0\leq i<n_s\}, \mathcal{E}_s)$ is an undirected graph with $n_s$ stars where the internal nodes $\{n_0\}$ form an undirected fully-connected graph with edges $\mathcal{E}_s$. 
\end{definition}

\begin{remark}
    Linked stars graph are a special type of heterogeneous graph $\mathcal{G}(\{\mathcal{V}_i\},\{\mathcal{E}_j\})$. For example, in a WLAN deployment, the linked star graph is a heterogeneous graph with two node types (AP and STA) and two edge types (AP-STA and AP-AP).
\end{remark}

Despite the fact that star or star-structured graphs are widely used in many applications \cite{guo-etal-2019-star,10.1145/3340531.3412014}, the expressive power of GNN on these graphs has not been analyzed. To fill this gap, we show that certain GNNs are as powerful as the Weisfeiler-Lehman (WL) test \cite{weisfeiler1968reduction} that achieves maximal expressive power on linked stars graphs by mapping any non-isomorphic pair to different embeddings.

\begin{theorem}
    \label{the:1wl}
    1-WL test can distinguish any two non-isomorphic linked stars graphs.
\end{theorem}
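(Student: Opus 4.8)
The plan is to prove the statement in the \emph{completeness} direction: since 1-WL is always sound (isomorphic graphs receive identical stable colorings), it suffices to show that for the restricted class of linked stars graphs the stable 1-WL coloring is a \emph{complete} isomorphism invariant, so that any two graphs with the same color histogram must in fact be isomorphic. Equivalently, I would show that two non-isomorphic linked stars graphs necessarily differ in their stable color multisets.

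First I would pin down a clean combinatorial invariant for isomorphism on this class. Writing $d_i$ for the number of leaves (STAs) attached to the $i$-th internal node (AP), I claim that two linked stars graphs are isomorphic if and only if they have the same number of internal nodes $n_s$ and the same multiset of leaf-degrees $\{d_0,\dots,d_{n_s-1}\}$. The ``only if'' direction is immediate; for ``if'', I would build the isomorphism explicitly from any degree-preserving bijection between the internal nodes---this is legitimate precisely because the internal nodes form a clique, so every such bijection automatically preserves the AP-AP edges---together with arbitrary bijections between the matched leaf sets. I would treat the node types (AP versus STA) as part of the structure so that this bijection respects heterogeneity; node and edge features can then only refine the invariant and never merge non-isomorphic graphs.

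Next I would run color refinement with the initial colors set by node type. In the first round each AP aggregates the multiset $\{A^{\,n_s-1}, S^{\,d_i}\}$ from its neighbors, so its refined color encodes exactly its own leaf-degree $d_i$, while every STA, having a single AP neighbor, receives one common color. In the second round the clique structure does the real work: each AP now sees the colors of \emph{all} other APs, so its color comes to encode the full multiset $\{d_j : j\neq i\}$ alongside its own $d_i$. Hence the stabilized color histogram of the whole graph recovers both $n_s$ (the number of AP-colored nodes) and the multiset $\{d_i\}$. Combining this with the invariant above closes the argument: if two linked stars graphs are non-isomorphic they must differ in $n_s$ or in $\{d_i\}$, and in either case their 1-WL color histograms differ.

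The step I expect to be the main obstacle is the \emph{completeness} claim in the second round---arguing rigorously that the refined colors capture the entire leaf-degree multiset rather than merely local degree information. The structural fact that makes this possible, and which distinguishes linked stars from the regular graphs on which 1-WL famously fails, is that the internal nodes form a clique, so a single extra refinement round propagates every center's degree to every other center. I would also be careful with degenerate configurations---$n_s = 1$ (no clique edges) and APs carrying zero STAs---and would lean on the AP/STA typing to rule out spurious collisions (e.g.\ a leaf and a leafless center both having degree one) that could otherwise complicate an untyped analysis.
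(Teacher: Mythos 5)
Your proposal is correct and takes essentially the same route as the paper's proof: both argue the completeness direction, using the fact that the internal clique makes the pair (number of centers, multiset of leaf-degrees) a complete isomorphism invariant---since any degree-preserving matching of centers extends to an isomorphism---and that one or two rounds of 1-WL refinement recover exactly this invariant. If anything, your write-up is more careful than the paper's, which identifies internal nodes by the untyped degree condition $d_v>2$ and silently glosses over the degenerate small configurations (e.g.\ leafless centers or centers of degree at most two) that you explicitly flag and handle via the AP/STA initial coloring.
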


\begin{proof}
    We show that any pair of linked stars graphs that passes the 1-WL test is isomorphic. The first 1-WL iteration checks the degree distribution of the 1-hop neighbors. In a linked star graph, all nodes $v$ with with degree $d_v>2$ are internal star nodes and their induced subgraph of the node set $\{v, d_v>2\}$ is fully connected. Hence, for two linked star graphs with the same degree distribution, the node mapping function that maps the $d>2$ nodes according to their degree and $d=1$ nodes according to the degree of their neighbors proves that the two graphs are isomorphic.
\end{proof}

Recent works \cite{gin,maron2019provably,azizian2021expressive,10.5555/3454287.3455711} have shown that the WL test and GNN share many common properties due to their similar propagation schemes. For a GNN that follows Equation \ref{eq:generallayer} in the forward propagation, Keyulu et al. \cite[Theorem~3]{gin} prove that GNNs with injective $\phi$, $\psi$, and graph readout function are as powerful as the 1-WL test. Hence, we can conclude that GNNs with message passing schemes that meet the aforementioned injective conditions can distinguish any linked stars graphs by mapping them to different embeddings and achieve maximum expressive power. 

However, discrimination of non-isomorphic homogeneous linked star graphs is not enough in the WLAN throughput prediction problem. When the AP nodes aggregates information from their heterogeneous neighbors, the node type (AP node or STA node) can have antithetical influence on the throughput. For example, two nearby AP-STA node pairs with the same channels usually have high throughput while two nearby AP-AP node pairs leads to high interference. The aggregation function $\phi(\cdot)$ that operates on homogeneous neighbors cannot map heterogeneous multisets into injective values, even with an additional binary indicator node feature \cite{ATARI}. 
We address this issue by using an additional injective aggregation function $\gamma(\cdot)$ to model the heterogeneity. 

\noindent\textbf{GNN Depth} 
Since the maximum diameter of any linked star graph is 3 (i.e., an STA-AP-AP-STA path), the receptive field of a 3-layer GNN is large enough to cover the whole graph. Without a dedicated graph readout function (which acts as adding an extra node that is connected to all other nodes in the graph), a shallow GNN can embed the whole graph information into each node embedding. For a 3-layer GNN with injective $\phi$ and $\psi$, the output node embeddings can distinguish any differences in the graph structure or the node features. 
However, for a less powerful GNN, more layers are needed to compensate the lack of resolution in the aggregation function, as shown in shaDow-GNN \cite{shaDow} where the extra depth improves the performance of GNNs with bounded-size receptive field. In experiments, we verify that a less powerful GNN requires more number of layers to achieve its best performance compared with a maximally expressive GNN. 
We further discover that 2-layer HTNet which provides information of all APs to each STA achieves comparable accuracy as a 3-layer HTNet while requiring less runtime and less amount of supporting features, due to the fact that the initial AP features already includes a fair amount of information from their corresponding STAs. Please refer to Section \ref{sec:result} for details.

\subsection{HTNet Architecture}
\label{sec:htnet}

Figure \ref{fig:main} shows an overview of HTNet predicting the throughput on a dynamic WLAN deployment with $s+1$ snapshots. We first introduce HTLayer (HTL) which is the basic building block for HTNet. For reasons of expediency, we omit the time $t=t_0,t_1,\cdots,t_s$ when introducing HTL.

\noindent\textbf{HTL}: Consider a WLAN snapshot $\mathcal{G}$ %
at time $t$ with $n$ APs and $m$ STAs. We model the external interference sources (red nodes in Figure \ref{fig:main} as close AP-STA pairs that occupy all available channels. There are $n(n-1)/2$ AP-AP edges modeling the interference among APs and $m$ STA-AP edges modeling the transmission between STAs and APs. In the message passing process, an undirected edge is treated as two directed edges with opposite directions. Hence, we have three types of directed edges: $n(n-1)$ AP$\to$AP edges, $m$ STA$\to$AP edges, and $m$ AP$\to$STA edges. Let the node feature matrix be $\{\mathbf{h}_v^{(0)}\}=\mathbf{H}_v^{(0)}\in\mathbb{R}^{n+m,d_v}$ and the edge feature matrix be $\{\mathbf{h}_{uv}^{(0)}\}=\mathbf{H}_e^{(0)}\in\mathbb{R}^{n(n-1)+2m,d_e}$ where $d_v$ and $d_e$ are the lengths of node and edge feature vectors. Note that the directed edges copied from the undirected edges in the message passing process share the same edge feature matrix. 

For each relation $r\in\{\text{AP}\to\text{AP},\text{STA}\to\text{AP},\text{AP}\to\text{STA}\}$, we need to find an injective aggregation function $\psi(\cdot)$ that operates on homogeneous neighbor multisets. The injective combine function $\phi(\cdot)$ is replaced with a residual connection and will be introduced later. Under the assumption of countable feature space, GIN~\cite{gin} shows that the sum aggregator is a universal injective function over homogeneous multisets. To accommodate the edge features, we enhance the sum aggregator with edge-aware heterogeneous attention mechanism HEGATConv~\cite{kaminski2022rossmann}. For the $k$-th HTL layer with input and output node and edge dimensions $d_v^\text{in(k)},d_v^\text{out(k)},d_e^\text{in(k)},d_e^\text{out(k)}$, the hidden node features $\mathbf{h}_v^{(k)}(r)$ of relation $r$ is updated by
\begin{equation}
    \mathbf{h}_v^{(k)}(r)=\text{ReLU}\left(\sum_{u\in\mathcal{N}_r(v)}a_{uv}\mathbf{W}_r^{(k)}\mathbf{h}_u^{(k-1)}+\mathbf{b}_r^{(k)}\right),
\end{equation}
where $\mathcal{N}_r(v)$ is the neighbor set of node $v$ with relation $r$, $\mathbf{W}_r^{(k)}\in\mathbb{R}^{d_v^\text{out(k)},d_v^\text{in(k)}}$ is the learnable weight matrix of relation $r$, $\mathbf{b}_r^{(k)}\in\mathbb{R}^{d_v^\text{out(k)}}$ is the learnable bias matrix of relation $r$. And the edge feature-aware attention scores $a_{uv}$ are computed by
\begin{align}
    a_{uv}&=\mathbf{w}_a^{(k)}\mathbf{h}_{uv}^{(k)}\\
    \mathbf{h}_{uv}^{(k)}&=\text{LeakyReLU}\left(\mathbf{W}_a^{(k)}\left[\mathbf{h}_u^{(k-1)}||\mathbf{h}_{uv}^{(k-1)}||\mathbf{h}_v^{(k-1)}\right]\right),
\end{align}
where $\mathbf w_a^{(k)}\in\mathbb{R}^{d_e^\text{out(k)}}$ is the learnable attention weight vector, $\mathbf{W}_a^{(k)}\in\mathbb{R}^{d_e^\text{out(k)},2d_v^\text{in(k)}+d_e^\text{in(k)}}$ is the learnable edge weight matrix, and $||$ denotes the concatenation operation. Here the attention weight vector $\mathbf w_a^{(k)}$ and the edge weight matrix $\mathbf W_a^{(k)}$ are shared among all relations and the hidden edge features $\mathbf h_{uv}^{(k)}$ are updated for all edge types prior to the message passing process of each relation. 

For the heterogeneous combine function $\gamma(\cdot)$, since the number of relations is only three, we simply choose the most expressive concatenation function. The output node features of the $k$-th HTL layer is
\begin{equation}
    \mathbf{h}_v^{(k)}=\mathop{||}_{r\in\mathcal{R}}\mathbf{h}_v^{(k)}(r).
\end{equation}
For the STA nodes that only have one incoming AP$\to$STA edge, the missing $\mathbf{h}_v^{(k)}(r)$ for $r\in\{\text{AP}\to\text{AP},\text{STA}\to\text{AP}\}$ is replaced with all zero vectors.

\noindent\textbf{Over-Smoothing}: The over-smoothing problem usually occurs in deep GNNs when the node embeddings of all nodes converge to the same vector and lose the ability to distinguish each individual node. Although HTNet only requires 2 or 3 layers, the special graph structure of WLAN deployments exacerbates the over-smoothing problem. Any two STAs under the same AP share the same sets of 1-hop and 2-hop neighbors. Any two STAs under different APs even have a common 2-hop neighbor set of all other APs. To mitigate the over-smoothing issue, we adopt the JK\nobreakdash-Net~\cite{xu2018representation} setup to add a residual connection after each HTL. The output node embedding of $K$-layer HTLs is
\begin{equation}
    \mathbf h_v=\phi\left(\mathop{||}_{0\leq k\leq K}\mathbf{h}_v^{(k)}\right),
\end{equation}
where the injective $\phi(\cdot)$ function is implemented using a 1\nobreakdash-layer MLP.

\noindent\textbf{HTNet}: After HTLs perform message passing on each WLAN snapshot and obtain $\mathbf{h}_v^t$ at $t=t_0,t_1,\cdots,t_s$, we apply a sequence model to capture the temporal information. Among the popular sequence models, we find in experiments that LSTM achieves the best performance compared with RNN, GRU, and Transformer. Hence, we implement LSTM network to compute the final dynamic node embedding $\tilde{\mathbf{h}}_v^t$
\begin{align}
    \mathbf{f}_v^t&=\sigma\left(\mathbf{W}_f\mathbf{h}_v^t+\mathbf{U_f}\tilde{\mathbf{h}}_v^{t-1}+\mathbf{b}_f\right)\\
    \mathbf{i}_v^t&=\sigma\left(\mathbf{W}_i\mathbf{h}_v^t+\mathbf{U_i}\tilde{\mathbf{h}}_v^{t-1}+\mathbf{b}_i\right)\\
    \mathbf{o}_v^t&=\sigma\left(\mathbf{W}_o\mathbf{h}_v^t+\mathbf{U_o}\tilde{\mathbf{h}}_v^{t-1}+\mathbf{b}_o\right)\\
    \mathbf{c}_v^t&=\mathbf{f}_v^t\circ\mathbf{c}_v^{t-1}+\mathbf{i}_v^t\circ\sigma\left(\mathbf{W}_c\mathbf{h}_v^t+\mathbf{U}_c\tilde{\mathbf{h}}_v^{t-1}+\mathbf{b}_c\right)\\
    \tilde{\mathbf{h}}_v^t&=\mathbf{o}_v^t\circ\sigma\left(\mathbf{c}_v^t\right),
\end{align}
where $\sigma(\cdot)$ is the sigmoid function, $\circ$ is the point-wise multiplication operator, $\mathbf{W}$ and $\mathbf{U}$ are the learnable weight matrices, $\mathbf{b}$ is the learnable bias, $t-1$ denotes the previous timestamp, and $\mathbf{c}_v^t$ is initialized to be $\mathbf{0}$. We apply a linear layer to generate the predicted throughput $\hat{y}_v^t$ and add a softplus function to ensure that the predicted throughput is strictly positive
\begin{equation}
    \hat{y}_v^t=\text{log}\left(1+\text{exp}\left(\mathbf{W}_y\tilde{\mathbf{h}}_v^t\right)\right).
\end{equation}
We apply batch normalization with learnable affine parameters after each HTL. All the learnable parameters in HTNet are trained end-to-end supervised by the ground truth throughput $y_v^t$ in the training set. The loss function is to minimize the Root Mean Square Error (RMSE) of all target STAs across all the time
\begin{equation}
    \mathcal{L}=\sqrt{\frac{\sum_v\sum_t\left(y_v^t-\hat{y}_v^t\right)^2}{|\{v\}||\{t\}|}}.
\end{equation}

\noindent\textbf{Comparison with ATARI}: HTNet and ATARI~\cite{ATARI} both apply GNN to extract features from graph-structured WLAN data and predict the throughput. However, HTNet considers the heterogeneity between AP and STA nodes and adopts attention mechanism in aggregation, which are proven to achieve maximized expressiveness on WLAN graphs. In addition, HTNet also captures the temporal information jointly with the structural and contextual information in WLAN graphs.

\noindent\textbf{HTNet Applications}: HTNet can be widely applied to many WLAN problems. In the first place, as a throughput predictor, HTNet can be used to determine the reward in Reinforcement Learning (RL) algorithms~\cite{8993737,s20102789} to learn channel allocation policies. More generally, the dynamic node embeddings $\tilde{\mathbf{h}}_v^t$ generated by HTNet, by virtue of their high expressiveness in low-dimensional spaces, can be directly used in downstream applications, such as channel allocation, link scheduling, power control, beamforming, and network flow optimization.
\section{Experiments}
\label{sec:exp}

\begin{figure*}[t]
  \centering
  \input{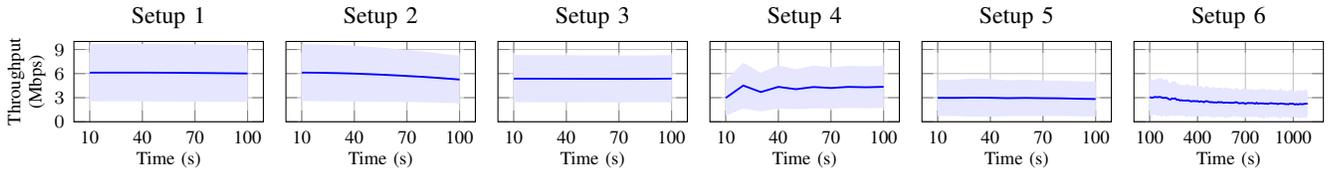}
  \caption{Average dynamic throughput (Mbps) in each setup. The shaded regions show the standard deviations in each snapshot.}
  \label{fig:dsmean}
\end{figure*}

We arrange the section as follows. We first introduce the dataset in Section \ref{sec:ds}. Then, we introduce the baseline methods in Section \ref{sec:bs}. In Section \ref{sec:code}, we discuss the implementation of HTNet and the baselines. Lastly, we present the results and ablation study in Section \ref{sec:result}. The codes and dataset are publicly available at GitHub\footnote{\url{https://github.com/tedzhouhk/HTNet}}.

\subsection{Dataset}
\label{sec:ds}

HTNet is the first work that considers the dynamics in WLAN deployments and to the best of our knowledge, there are no existing WLAN deployment datasets that include temporal information. Fortunately, discrete event simulators can be used to generate large-scale synthetic data for the ML models. We choose Komondor~\cite{barrachina2019komondor}, the fastest discrete event simulator that supports 802.11ax features, which has been cross-validated with ns-3~\cite{ns3} and real-world test-benches. Note that HTNet is a supervised model and its performance can be further improved through more accurate training data such as real-world WLAN deployments. There are three types of dynamics in WLAN deployments: (1) mobility of STAs. For example, a customer streams music using a smart phone while walking in a department store. (2) dynamic interference sources, e.g., an external Device-to-Device (D2D) network with dynamic channel and power configurations. In this work, we only focus on mobile dynamic interference sources with fixed power and channel configurations. (3) dynamic channel allocation, i.e., the available channels are dynamically configured by APs.

\begin{table}[b]
    \centering
    \setlength{\tabcolsep}{5pt}
    \begin{tabular}{r|cccc}
         & \makecell{Sequence\\Length} & \makecell{Mean Thpt.\\(Mbps)} & \makecell{STD Thpt.\\(Mbps)} & \makecell{\# Deployments\\Train/val/test}\\
        \toprule
        Setup 1 & 10 & 6.02 & 3.56 & 3000/1000/1000\\
        Setup 2 & 10 & 5.82 & 3.35 & 3000/1000/1000\\
        Setup 3 & 10 & 5.35 & 2.93 & 3000/1000/1000\\
        Setup 4 & 10 & 4.11 & 2.63 & 3000/1000/1000\\
        Setup 5 & 10 & 2.94 & 2.25 & 3000/1000/1000\\
        Setup 6 & 100 & 2.48 & 1.89 & 300/100/100\\
    \end{tabular}
    \caption{Dataset statistics.}
    \label{tab:ds}
\end{table}

To cover a wide range of real-world dynamic WLAN deployments, we generate a dynamic dataset with six setups. Setups 1-2 cover the case of mobile STAs. Setup 3 covers the case of mobile interference sources. Setup 4 covers the case of dynamic channel configuration. Setups 5-6 cover more complex cases with both mobile STAs and dynamic channel configurations. To generate these dynamic setups, we use the six training scenarios in the ITU AI/ML Challenge dataset~\cite{cbpredict} as a starting point. These six training scenarios contain 500 static dense WLAN deployments with the map sizes ranging from 80 by 60 to 40 by 20 meters, number of APs per deployment ranging from 8 to 12, and number of STAs per AP ranging from 5 to 20. We adopt the same WLAN configuration as in the ITU AI/ML Challenge --- 802.11ax Wi-Fi with 8 consecutive 20MHz 5G channels and Always Max Log2 dynamic channel bonding mechanism. The traffic is set to be full-buffered downlink UDP traffic. Starting from these 500 static deployments, we generate six dynamic setups. The first four setups focus on different dynamic scenarios while the last two setups cover more general use cases.
\begin{itemize}
    \item \textbf{Setup 1: mobile STAs (within AP).} In setup1, STAs are moving along straight lines. We randomly pick 50\% of the STAs of each AP to be the mobile STAs while the remaining 50\% are static. The trajectory of these mobile STAs are straight lines pointing at random directions. Starting from the 500 initial WLAN deployments, the moving speeds are chosen uniformly randomly from 0.1-0.5 meters and are fixed in the dynamic sequences. If an STA is moving out of the coverage range of its AP, it stops moving. In setup 1, 16\% of the total movements result in out of coverage and are stopped.
    \item \textbf{Setup 2: mobile STAs (across AP).} Setup 2 is similar to setup 1 where there are 50\% of mobile STAs moving along random directions. However, in setup 2, we double the upper bound of their moving speeds, ranging from 0.1-1 meters per snapshot. In addition, if one STA is moving out of the coverage range of its AP, it is handed-over to the closest AP that can cover the STA. The STA will stop moving only if it moves out of the coverage range of all APs in the deployment. In setup 2, 6\% of the total movements result in out of coverage and are stopped.
    \item \textbf{Setup 3: mobile interference sources.} Setup 3 covers the cases of dynamic interference sources. As mentioned above, in this work, we only consider mobile interference source with fixed power and channel. Specifically, we simulate an external interference source as a AP-STA pair capturing all available 8 channels placed very close to each other (the STA is placed 1cm above the AP.) with the largest transmission power available. In setup 3, we randomly place 3 interference sources in each  deployment. These interference sources are moving along random directions with constant speeds chosen uniformly from 1-3 meters per snapshot at random.
    \item \textbf{Setup 4: dynamic channel allocation.} Setup 4 covers the cases where APs are actively changing the available channels during transmission. 
    We adopt a random channel allocation scheme: In each snapshot, the APs have equal probability to (1) increase minimum available channel number, (2) decrease minimum available channel number, (3) increase maximum available channel number, (4) decrease maximum available channel, or (5) add an offset to both the minimum or maximum available channel numbers.
    \item \textbf{Setup 5: mobile STAs + dynamic channel allocation.} Setup 5 is a combination of setups 2 and 4.
    \item \textbf{Setup 6: long sequence.} Setup 6 is also a combination of setups 2 and 4, but with longer sequence.
\end{itemize}
The sequence length is 10 snapshots for setups 1-5 and 100 snapshots for setup 6 where each snapshot represents 10 seconds in the real world. 
Random movement directions are chosen for the first snapshot and remain fixed for the remaining snapshots.
For setups 1-5, we generate 10 different dynamic WLAN deployments from one static WLAN deployment, while for setup 6 we only generate 1. The dataset statistics are shown in Table \ref{tab:ds} and the average throughput in each snapshot is shown in Figure \ref{fig:dsmean}. We use a server with dual AMD EPYC 7763 CPU and 1TB of DDR4 memory to run the simulation. We launch 128 independent Komondor simulation processes at the same time to fully utilize the 128 cores. The whole simulation process takes around one week.

\begin{table*}[ht!]
  \centering
  \caption{Test RMSE and MAE (Mbps) on the six experiment setups. The \ding{51} or (\ding{51}) mark denotes that the corresponding method uses full or partial contextual, structural, and temporal information. (\textbf{First} \underline{second})}
  \setlength{\tabcolsep}{3pt}
  \begin{tabular}{r|c|c|c||c||cc|cc|cc|cc|cc|cc}
     & \multirow{2}{*}{\rotatebox[origin=c]{90}{\parbox[c]{.6cm}{\linespread{0.6}\selectfont\centering Cont-extual}}} & \multirow{2}{*}{\rotatebox[origin=c]{90}{\parbox[c]{.6cm}{\linespread{0.6}\selectfont\centering Stru-ctural}}} & \multirow{2}{*}{\rotatebox[origin=c]{90}{\parbox[c]{.6cm}{\linespread{0.6}\selectfont\centering Tem-poral}}} & \multirow{2}{*}{\# Param.} & \multicolumn{2}{c|}{Setup 1} & \multicolumn{2}{c|}{Setup 2} & \multicolumn{2}{c|}{Setup 3} & \multicolumn{2}{c|}{Setup 4} & \multicolumn{2}{c|}{Setup 5} & \multicolumn{2}{c}{Setup 6}\\
     & & & & & RMSE & MAE & RMSE & MAE & RMSE & MAE & RMSE & MAE & RMSE & MAE & RMSE & MAE\\
    \toprule
    SINR & (\ding{51}) & & & - & 5.3280 & 3.7653 & 5.2527 & 3.6073 & 4.8163 & 3.3863 & 4.0335 & 2.4572 & 3.4028 & 2.0480 & 2.7934 & 1.7096\\
    GBRT & \ding{51} & & & - & 4.4906 & 3.0936 & 4.5773 & 3.0761 & 4.1263 & 2.8456 & 3.2776 & 1.9163 & 2.7991 & 1.5650 & 2.2748 & 1.2934\\
    ATARI & \ding{51} & (\ding{51}) & & 75,315 & 4.3630 & 2.9407 & 4.8505 & 3.2958 & 4.3230 & 3.0650 & 3.8030 & 2.3475 & 3.1699 & 1.8319 & 2.5271 & 1.4876\\
    Ramon & \ding{51} & & & 20,275 & 2.3928 & 1.6809 & 2.9664 & 2.0042 & 2.6728 & 1.7966 & \underline{2.7313} & \underline{1.5415} & \underline{2.4016} & \underline{1.2884} & 2.2707 & \underline{1.1688}\\
    ATARI+LSTM & \ding{51} & (\ding{51}) & \ding{51} & 339,507 & 3.1443 & 2.1183 & 3.4640 & 2.2904 & 3.4249 & 2.3683 & 3.3683 & 1.9282 & 2.7456 & 1.4897 & 2.2358 & 1.2359\\
    Ramon+LSTM & \ding{51} & & \ding{51} & 284,467 & \underline{1.5849} & \underline{1.0501} & \underline{2.6594} & \underline{1.7072} & \underline{2.3313} & \underline{1.4949} & 2.7429 & 1.5519 & 2.4693 & 1.2944 & \underline{2.2175} & 1.1778\\
    \midrule
    HTNet & \ding{51} & \ding{51} & \ding{51} & 525,619 & \textbf{1.4705} & \textbf{0.9496} & \textbf{1.9040} & \textbf{1.1976} & \textbf{1.7928} & \textbf{1.1632} & \textbf{1.9760} & \textbf{1.0441} & \textbf{1.7102} & \textbf{0.8627} & \textbf{1.5559} & \textbf{0.7963}
  \end{tabular}
  \label{tab:acc}
\end{table*}

\begin{figure*}[t]
  \centering
  \input{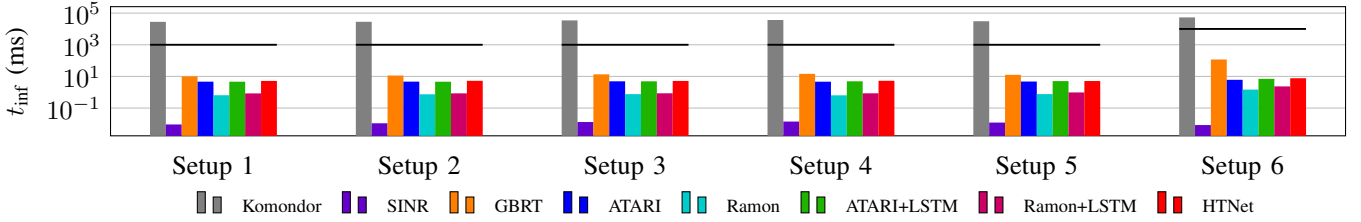}
  \caption{Average inference time per sequence compared with the simulation time of the Komondor simulator. The solid lines denote the actual time of the dynamic deployment.}
  \label{fig:runtime}
\end{figure*}

\subsection{Baseline Methods}
\label{sec:bs}

To compare the performance of HTNet with existing works, we choose four baselines.
\begin{itemize}
    \item \textbf{SINR} is based on the theoretical single link channel capacity equation $c=\Gamma\log(1+\text{SINR})$.
    \item \textbf{GBRT}~\cite{7338298} uses Gradient Boosted Regression Trees which predict the throughput through multiple independent regression trees. We choose GBRT as it achieves the best performance among the out-of-the-box ML methods~\cite{7338298,KHAN2020102499}.
    \item \textbf{ATARI}~\cite{ATARI} is the second place winner in the ITU AI/ML challenge. ATARI applies GNN to solve the WLAN throughput prediction problem.
    \item \textbf{Ramon}~\cite{cbpredict} is the first place winner in the ITU AI/ML challenge. Ramon uses a feed-forward deep learning algorithm to predict throughput from signal quality and AP bandwidth.
\end{itemize}
We also extend ATARI and Ramon to capture temporal information by adding an additional LSTM model (\textbf{ATARI+LSTM} and \textbf{Ramon+LSTM}). The number of parameters in each model is shown in Table \ref{tab:acc}.

\subsection{Implementation}
\label{sec:code}

We predict the current throughput of the target STAs given the current and all previous snapshots. The ground truth throughput of previous snapshots is not used as input data as it is not available in most real-world use cases. However, HTNet can be easily adapted to different problem setups such as the extrapolation setup which predicts future throughput without the future snapshots.
To ensure a fair comparison, we make sure that HTNet and all baseline methods share the same set of input features. We first define the node and edge features for the graph-based methods ATARI, ATARI+LSTM, and HTNet. As mentioned in Section \ref{sec:backpd}, the heterogeneous nodes and edges have the same feature vectors to be compatible with homogeneous GNNs. In this work, the input node features are 21-dimensional floating point vectors including the following information:
\begin{itemize}
    \item Node type (1-dimensional): 0 for AP, 1 for STA.
    \item Position (2-dimensional): $x$ and $y$ position of the AP or STA.
    \item Primary channel (8-dimensional): one-hot vector indicating the primary channel.
    \item Available channels (8-dimensional): multi-hot vector indicating the available channels.
    \item Airtime (1-dimensional): airtime for AP, 0 for STA.
    \item SINR (1-dimensional): SINR for STA, 0 for AP.
\end{itemize}
The input edge features are 4-dimensional floating point vectors including the following information:
\begin{itemize}
    \item Edge type (1-dimensional): 0 for AP-STA, 1 for AP-AP.
    \item Distance (1-dimensional): physical distance between two nodes.
    \item RSSI (1-dimensional): RSSI for AP-STA, 0 for AP-AP.
    \item Interference (1-dimensional): interference for AP-AP, 0 for AP-STA.
\end{itemize}
The airtime, SINR, RSSI, and interference are generated by Komondor in the simulation process. Note that we use a one-hot vector for primary channel and a multi-hot vector for available channels instead of scalars as in ATARI~\cite{ATARI}, since the channels are independent of each other. For the non-graph methods (GBRT, Ramon, Ramon+LSTM), the input features are the concatenation of the node features and the edge features of the corresponding AP-STA edges. 
All experiments are performed on dual AMD EPYC 7763 CPU with 1TB of DDR4 memory and an RTX A6000 GPU with 48GB of GDDR6 memory. The SINR and GBRT baselines are implemented on CPU while the other methods including HTNet are implemented on GPU. Please refer to our open-sourced codes for more implementation details.

\begin{figure*}[t]
  \centering
  \input{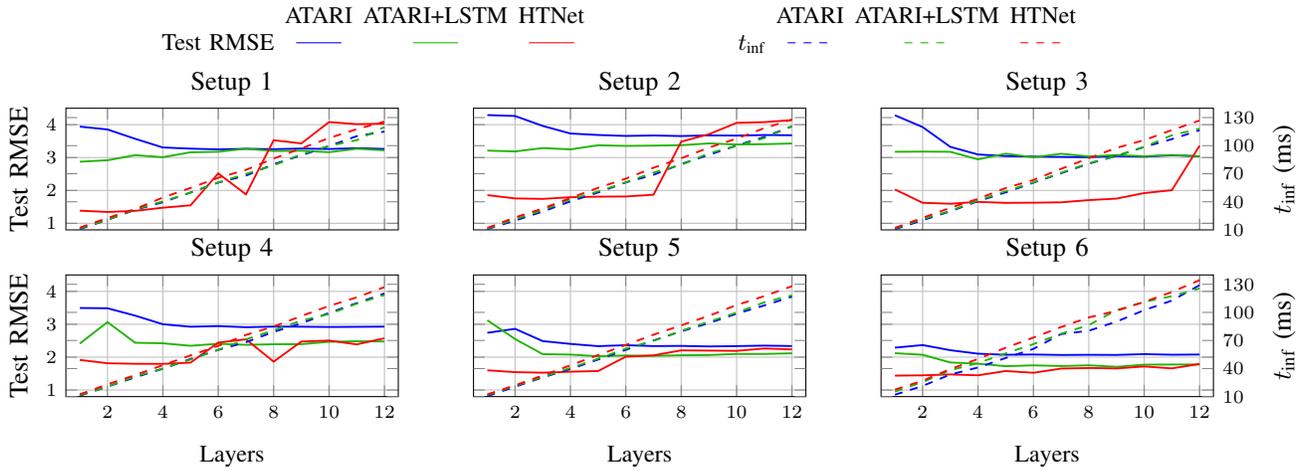}
  \caption{Test RMSE and inference time of ATARI, ATARI+LSTM, and HTNet with different number of GNN layers.}
  \label{fig:layer}
\end{figure*}

\subsection{Results}
\label{sec:result}

\noindent\textbf{Accuracy}: To compare the accuracy of HTNet with the baselines, we set the number of layers $K$ to be 2 and hidden feature dimensions $d_v^\text{out(k)},d_e^\text{out(k)}$ to be 128 for the neural network-based methods (ATARI, Ramon, ATARI+LSTM, Ramon+LSTM, and HTNet). The LSTM model in ATARI+LSTM, Ramon+LSTM, and HTNet also has two layers with 128 hidden dimension. We train these methods with 0.001 as the learning rate and 0 as the dropout rate for 150 epochs until convergence. For GBRT, we uses 100 gradient boosted trees with maximum depth of 4. Table \ref{tab:acc} shows the Root Mean Square Error (RMSE) and Mean Absolute Error (MAE) of HTNet and the baselines on the six setups. We also show the type of input information considered by these methods. HTNet is the only method that considers the complete contextual, structural, and temporal information. ATARI operates on homogeneous graphs which only captures partial structural information while SINR only captures partial contextual information. On all six setups, HTNet achieves the lowest prediction error with an average of 1.7345 RMSE and 0.3721 MAE improvement compared with the most accurate baselines. Ramon+LSTM achieves the lowest RMSE among the baselines on setups 1, 2, 3, and 6 while RAMON achieves the lowest RMSE on setups 4 and 5. The baseline methods ATARI and Ramon, when attached with a LSTM network to capture the temporal information, outperform their original static version, which is solid evidence that capturing temporal information is important in the WLAN throughput prediction problem. We observe that when the channel configuration is dynamically allocated (setups 4-6), temporal information needs to be combined with structural information to generate accurate throughput prediction. On the more general setups 5 and 6, HTNet achieves significantly better accuracy than all the baseline methods due to its high generalizability and robustness. 
We also test the static version of HTNet by removing the LSTM layer and compare the result with ATARI. On six datasets, HTNet without LSTM achieves an average of 37.1\% improvement than ATARI, which verified that the heterogeneous GNN architecture in HTNet is superior than the homogeneous GNN architecture used in ATARI.

\begin{table}[b]
    \centering
    \setlength{\tabcolsep}{2.7pt}
    \begin{tabular}{r|cccccc}
         & Setup 1 & Setup 2 & Setup 3 & Setup 4 & Setup 5 & Setup 6\\
        \toprule
        HTNet & 1.4705 & 1.9040 & 1.7928 & 1.9760 & 1.7102 & 1.5559\\
        \midrule
        w/o SINR & 1.3980 & 1.9085 & 1.9215 & 2.0145 & 1.8239 & 1.5645\\
        w/o airtime & 1.5278 & 2.0154 & 1.9858 & 2.1003 & 1.8124 & 1.5784\\
        w/o RSSI & 1.4635 & 1.9045 & 2.0154 & 2.0112 & 1.9112 & 1.6125\\
        w/o channel & 1.6845 & 2.3125 & 2.1023 & 2.4156 & 1.9652 & 1.7254\\
        w/o position & 1.7141 & 2.4012 & 2.3142 & 2.3241 & 1.9532 & 1.7098\\
        \midrule
        o channel & 3.8147 & 4.2100 & 3.6845 & 3.5487 & 3.3489 & 2.9896\\
        o position & 3.6467 & 3.9963 & 4.1085 & 3.4637 & 3.0012 & 2.8711\\
    \end{tabular}
    \caption{Test RMSE of HTNet with different set of input features. `w/o' denotes without while `o' denotes only.}
    \label{tab:ablation}
\end{table}

\noindent\textbf{Runtime}: Figure \ref{fig:runtime} shows the average inference time $t_\text{inf}$ per dynamic WLAN deployment snapshot. 
The simulation time of Komondor is more than 10 times longer compared with the real-world transaction time and becomes the bottleneck when applied in WLAN optimizations problems. SINR, due to its simplicity, requires less than 0.1ms inference time on all six setups. Ramon and Ranmon+LSTM are in the second tier with less than 1ms inference time, since they do not require aggregation from neighbor nodes. Graph-based methods ATARI, ATARI+LSTM, and HTNet are in the third tier with 4-8ms inference time. Note that although the message passing scheme in HTNet is more complex than in ATARI, the runtime of HTNet is only 10.5\% longer since the most time-consuming operation on GPU is the sparse-dense matrix operation which is the same for both. GBRT that operates on CPU is the slowest method. We believe the inference time of all ML methods meet the requirements of various downstream applications.

\noindent\textbf{Number of GNN layers}: We study the accuracy and inference time of graph-based methods ATARI, ATARI+LSTM, and HTNet with different number of GNN layers. We vary the number of GNN layers from 1 to 12 and show the results in Figure \ref{fig:layer}. Since we are performing full-batch inference (predict the throughput of all STAs at the same time), the runtime is linear with the number of GNN layers for all methods. HTNet with 3 GNN layers is proven to achieve maximal expressive power and achieves the highest test RMSE. In practice, the 2-layer HTNet achieves similar test RMSE as a 3-layer version but only requires the STA nodes of the same AP and the AP nodes as the supporting nodes and has the best accuracy-to-runtime ratio. Note that the accuracy of deeper HTNets drops due to a combination effect of over-smoothing and inferior convergence, especially in simple WLAN deployments (setups 1-3). On the other hand, less powerful GNNs ATARI and ATARI+LSTM require more number of layers to compensate the lack of expressive power in their message passing schemes. Although the temporal information in ATARI+LSTM makes up the deficiency, in complex WLAN deployments (setups 5-6), it still requires more than 10 layers for its best accuracy. 

\noindent\textbf{Input features:} We also perform ablation study on the contribution of each input feature and show the results in Table \ref{tab:ablation}. We first remove the SINR, airtime, RSSI, channel configuration (including primary channel and available channels, denoted as `channel' in Table \ref{tab:ablation}), and location information (including position and distance, denoted as `position' in Table \ref{tab:ablation}) one at a time from the input features. Removing SINR from the input features only has mild affect on the accuracy, except on setup 3 with mobile interference source. Channel configuration and location are important input features, and channel configuration is the most important input feature on setups 4-6 with dynamic channel allocation. 

\section{Conclusion}

We proposed HTNet -- the first HTGNN-based dynamic WLAN performance predictor. We designed a special message passing scheme and proved that it achieved maximal expressive power on dynamic WLAN deployment graphs. We generated the first dynamic WLAN throughput prediction dataset which contains more than 25 thousand dynamic WLAN deployments. HTNet outperformed state-of-the-art methods in all six experimental setups. We believe that the high quality dynamic embeddings and the accurate throughput prediction in HTNet can benefit many important network problems.

\section*{Acknowledgment}
This work is supported by the National Science Foundation (NSF) under grant OAC-2209563 and the DEVCOM Army Research Lab (ARL) under grant W911NF2220159. 

\nocite{*}
\bibliographystyle{IEEEtran}
\bibliography{sample}

\end{document}